\newcommand {\al}   {\alpha}       \newcommand {\bt}  {\beta}
\newcommand {\g }   {\gamma}       \newcommand {\G }  {\Gamma}
\newcommand {\dl}   {\delta}       \newcommand {\e }  {\epsilon}
\newcommand {\lm}   {\lambda}      
\newcommand {\s }   {\sigma}       
\newcommand {\vf }  {\varphi}
\newcommand {\pl}   {\partial}     \newcommand {\nb}  {\nabla}
\renewcommand {\sin}{{\sf\,sin\,}}       
\renewcommand {\det}{{\sf\,det\,}}
\newcommand   {\re}{{\sf\,re\,}}         
\newcommand   {\sign}{{\sf\,sign\,}}     
\newcommand   {\const}{{\sf\,const}}     \newcommand   {\diag}{{\sf\,diag\,}}
\newcommand {\MC}  {{\mathbb C}}
\newcommand {\MM}  {{\mathbb M}}   
   \newcommand {\MR}  {{\mathbb R}}
\newcommand {\Bq}  {\boldsymbol{q}}   \newcommand {\Bx}  {\boldsymbol{x}}
      \renewcommand {\CD}  {{\cal D}}
\newcommand {\vol}  {\sqrt{|g|}}
\newtheorem{theorem}{Theorem}[section]
\begin{document}
\title     {Point massive particle in General Relativity}
\author    {M. O. Katanaev
            \thanks{E-mail: katanaev@mi.ras.ru}\\ \\
            \sl Steklov Mathematical Institute,\\
            \sl ul. Gubkina, 8, Moscow, 119991, Russia}
\maketitle
\begin{abstract}
It is well known that the Schwarzschild solution describes the gravitational
field outside compact spherically symmetric mass distribution in General
Relativity. In particular, it describes the gravitational field outside a point
particle. Nevertheless, what is the exact solution of Einstein's equations with
$\dl$-type source corresponding to a point particle is not known. In the present
paper, we prove that the Schwarzschild solution in isotropic coordinates is the
asymptotically flat static spherically symmetric solution of Einstein's
equations with $\dl$-type energy-momentum tensor corresponding to a point
particle. Solution of Einstein's equations is understood in the generalized
sense after integration with a test function. Metric components are locally
integrable functions for which nonlinear Einstein's equations are mathematically
defined. The Schwarzschild solution in isotropic coordinates is locally
isometric to the Schwarzschild solution in Schwarzschild coordinates but differs
essentially globally. It is topologically trivial neglecting the world line of a
point particle. Gravity attraction at large distances is replaced by repulsion
at the particle neighborhood.
\end{abstract}
\section{Introduction}
In this article, we consider the classic problem: find the gravitational field
which is produced by a point massive particle. If particle is at rest, then
the gravitational field is spherically symmetric and static. The spherically
symmetric solution of the vacuum Einstein's equations is well known: it is the
Schwarzschild solution \cite{Schwar16}. Therefore, it is often stated that the
Schwarzschild solution (in the Schwarzschild coordinates) describes
gravitational field of a point particle. This statement is incorrect because
there is no $\dl$-type energy-momentum tensor corresponding to a particle on the
right hand side of Einstein's equations.

On the other hand, the solution of Einstein's equations outside point massive
particle must be isometric to the Schwarzschild solution. Therefore the natural
question arises: ``Where is the $\dl$-function ?''. The answer turned out to be
unexpected: $\dl$-function corresponds to infinite value of the Schwarzschild
radial coordinate. Namely, we prove in this paper, that the Schwarzschild
solution in isotropic coordinates is the solution of Einstein's equations in a
topologically trivial space-time $\MR^4$ with $\dl$-type source. The solution is
understood in the generalized sense after integration with a test function. At
the same time, the metric components are locally integrable functions for which
the nonlinear Einstein's equations are mathematically defined.

Note that the Schwarzschild solution in Schwarzschild coordinates is locally
isometric to the Schwarzschild solution in isotropic coordinates. Nevertheless
global structure of space-time is totally different. The maximal extension of
space-time along geodesics for the Schwarzschild solution in Schwarzschild
coordinates is the topologically nontrivial manifold which is equal to the
topological product of the well known Carter--Penrose diagram with the sphere.
This maximally extended solution describes white and black holes and is not
related to the gravitational field of a point particle. At the same time the
Schwarzschild solution in isotropic coordinates corresponds to the topologically
trivial space-time.

Severe mathematical difficulties arise during solution of this problem. A
solution of Einstein's equations must be understood in a generalized sense after
integration with test functions because the $\dl$-function stands on the right
hand side. But there is no multiplication in the space of generalized functions
(distributions) and the question arises what is the mathematical meaning of the
left hand side of Einstein's equations which are nonlinear. Besides, locally
nonintegrable functions arise during solution of equations, and some functionals
must be attributed to them. In other words, regularization is needed. In the
preset paper, the solution of Einstein's equations is found in a generalized
sense: the equations are satisfied after integration with test functions. We
choose the usual space $\CD(\MR^3)$ of infinitely differentiable functions on
$\MR^3$ with compact support as the space of test functions. Metric components
for the obtained solution are locally integrable functions and therefore belong
to the conjugate space $\CD'(\MR^3)$. Though no multiplication is defined in
$\CD'$, the left hand side of Einstein's equations is well defined for the
derived solution. Here we use the analytic regularization for exponential
spherically symmetric functionals.

The obtained solution turned out to be the well known Schwarzschild solution in
isotropic coordinates. It appears after gluing together two exterior solutions
of a black hole along the horizon. This solution is isometric to Einstein--Rosen
bridge and is asymptotically flat not only at large distances but near the
particle itself where the curvature of space-time tends also to zero
(gravitational asymptotic freedom). At large distances, the gravitational field
is attractive. Under the horizon attraction changes to repulsion. This repulsion
results in geodesic completeness of space-time near the particle.

Attempts to interpret the Schwarzschild solution in terms of generalized
functions were made earlier [2--6]. Papers [2--4] are related to our approach
and discussed in some detail in Section \ref{seldef}. In reference
\cite{Tanghe61}, the energy-momentum tensor of matter was taken to have
spherically symmetric Gaussian distribution around the origin of the coordinate
system, and exact solution to Einstein's equation is found. In the limit of zero
distribution radius, the energy-momentum tensor is proportional to the
$\delta$-function. This energy-momentum tensor differs from that in our paper
and does not correspond to a point particle. Another approach was adopted in
\cite{Bel69}. Roughly speaking, the Schwarzschild sphere was shrank to a point.
The energy-momentum tensor was shown to be proportional to the $\delta$-function
at this point in a sense of a distribution, though not all Einstein's equations
were actually solved. Again, the energy-momentum tensor is different from ours.
The authors of \cite{BalNac93} regularized the Schwarzschild metric in
Schwarzschild coordinates. Afterwards they calculated the energy momentum tensor
and take off the regularization. The resulting energy-momentum tensor is
proportional to the $\delta$-function located at the origin. In this case, the
energy-momentum tensor also differs from ours.
\section{Point mass in General Relativity}
Let us consider topologically trivial manifold $\MM\approx\MR^4$ (space-time)
with Cartesian coordinates $x^\al$, $\al=0,1,2,3$, and metric $g_{\al\bt}(x)$ of
Lorentzian signature $\sign g_{\al\bt}=(+---)$. We denote the world line of a
point particle by $\lbrace q^\al(\tau)\rbrace$ where $\tau$ is a parameter along
the world line. We use the following notations for geometric notions:
\begin{align}                                                     \label{qrhjff}
  \G_{\al\bt}{}^\g&:=\frac12g^{\g\dl}(\pl_\al g_{\bt\dl}
  +\pl_\bt g_{\al\dl}-\pl_\dl g_{\al\bt}),
\\                                                                \label{qcuyte}
  R_{\al\bt\g}{}^\dl&:=\pl_\al\G_{\bt\g}{}^\dl-\pl_\bt\G_{\al\g}{}^\dl
  -\G_{\al\g}{}^\e\G_{\bt\e}{}^\dl+\G_{\bt\g}{}^\e\G_{\al\e}{}^\dl,
\\                                                                \label{qricci}
  R_{\al\bt}&:=R_{\al\g\bt}{}^\g,
\\                                                                \label{qcsafr}
  R&:=g^{\al\bt}R_{\al\bt},
\end{align}
where $\G_{\al\bt}{}^\g$ are Christoffel's symbols, $R_{\al\bt\g}{}^\dl$ is
the curvature tensor, $R_{\al\bt}$ is the Ricci tensor, and $R$ is the scalar
curvature.

In General Relativity, a point particle of mass $M$ is described by the
following action
\begin{equation}                                                  \label{eacwek}
  S=\frac1{16\pi}\int\!dx\vol R
  -M\int\!d\tau\sqrt{\dot q^\al\dot q^\bt g_{\al\bt}},
\end{equation}
where $g:=\det g_{\al\bt}$ and $\dot q:=dq/d\tau$.

Variations of action (\ref{eacwek}) with respect to metric components
$g_{\al\bt}$ and coordinates of a particle $q^\al$ yield Einstein's equations
of motion and equations for extremals (geodesics):
\begin{align}                                                     \label{eiepop}
  R^{\al\bt}-\frac12g^{\al\bt}R=-\frac12T^{\al\bt}&,
\\                                                                \label{extwel}
  \left(\ddot q^\al+\left.\G_{\bt\g}{}^\al\right|_{x=q}\dot q^\bt\dot q^\g
  \right)g_{\al\dl}=0&,
\end{align}
where
\begin{equation}                                                  \label{emotej}
  T^{\al\bt}=\frac{16\pi M\dot q^\al\dot q^\bt}{\vol\dot q^0}\dl(\Bx-\Bq)
\end{equation}
is the particle energy-momentum tensor, $\dot q_\al:=\dot q^\bt g_{\bt\al}$, and
\begin{equation*}
  \dl(\Bx-\Bq):=\dl(x^1-q^1)\dl(x^2-q^2)\dl(x^3-q^3)
\end{equation*}
is the three-dimensional $\dl$-function on sections $x^0=\const$. We assume
that for particle trajectory $\dot q^0\ne0$.

To analyze system of equations (\ref{eiepop})--(\ref{extwel}), we rewrite it in
the Hamiltonian form. To this end, the ADM parameterization of the metric is
used
\begin{equation}                                                  \label{eadmml}
  g_{\al\bt}=\begin{pmatrix} N^2+N^\rho N_\rho & N_\nu \\
           N_\mu & g_{\mu\nu}\end{pmatrix},
\end{equation}
where $g_{\mu\nu}$ is the metric on space like sections of the space-time
$x^0=\const$. In this parameterization, $4$ functions: the lapse function $N(x)$
and shift functions $N_\mu(x)$, are introduced instead $4$ components $g_{00}$
and $g_{0\mu}$ of the metric. Here we use notation
$N^\rho:=\hat g^{\rho\mu}N_\mu$ where $\hat g^{\rho\mu}$ is the $3\times3$
matrix which is inverse to $g_{\mu\nu}$:
\begin{equation*}
  \hat g^{\rho\mu}g_{\mu\nu}=\dl^\rho_\nu.
\end{equation*}
In what follows, raising of space indexes which are denoted by Greek letters
from the middle of the alphabet ($\mu,\nu,\dotsc=1,2,3$) is performed using the
inverse three-dimensional metric marked with a hat. We assume that all sections
$x^0=\const$ are space like, and therefore the metric $g_{\mu\nu}$ is negative
definite.

Let $p^{\mu\nu}$ and $p_\al$ be momenta conjugate to generalized coordinates
$g_{\mu\nu}$ and $q^\al$. The action for a point particle is invariant with
respect to reparameterization of the world line. To simplify formulae, we fix
the gauge $\tau=q^0$. Then Einstein's equations (\ref{eiepop}) in the
Hamiltonian form reduce to constraint and dynamical equations. Constraints have
the form
\begin{align}                                                     \label{edycgw}
  H_\bot&=\frac1{\hat e}\left(p^{\mu\nu}p_{\mu\nu}-p^2\right)
  -\hat e\hat R+\sqrt{M^2+\hat p^2}\,\dl(\Bx-\Bq)=0,
\\                                                                \label{ekicop}
  H_\mu&=-2\hat\nb_\nu p^\nu{}_\mu-p_\mu\dl(\Bx-\Bq)=0,
\end{align}
where $\hat e:=\sqrt{|\det g_{\mu\nu}|}$,
$\hat p^2:=\hat g^{\mu\nu}p_\mu p_\nu$, $\hat R$ is the three dimensional
scalar curvature for metric $g_{\mu\nu}$, and $\hat\nb_\mu$ is the three
dimensional covariant derivative. Constraints (\ref{edycgw}) and (\ref{ekicop})
are called dynamical and kinematical, respectively. Dynamical Hamiltonian
Einstein's equations are
\begin{align}                                                     \label{edefmk}
  \dot g_{\mu\nu}&=\frac{2N}{\hat e}p_{\mu\nu}
  -\frac{N}{\hat e}g_{\mu\nu}p+\hat\nb_\mu N_\nu+\hat\nb_\nu N_\mu,
\\                                                                     \nonumber
  \dot p^{\mu\nu}&
  =\frac N{2\hat e}\hat g^{\mu\nu}\left(p^{\rho\s}p_{\rho\s}-\frac12p^2\right)
  -\frac{2N}{\hat e}\left(p^{\mu\rho}p^\nu{}_\rho-\frac12p^{\mu\nu}p\right)
  +\hat e(\hat\triangle N\hat g^{\mu\nu}-\hat\nb^\mu\hat\nb^\nu N)-
\\                                                                \label{ecamod}
  &-\hat eN\left(\hat R^{\mu\nu}-\frac12g^{\mu\nu}\hat R\right)
  -p^{\mu\rho}\hat\nb_\rho N^\nu-p^{\nu\rho}\hat\nb_\rho N^\mu
  +\hat\nb_\rho(N^\rho p^{\mu\nu})
  -\frac{Np^\mu p^\nu}{2\sqrt{{M^2}+\hat p^2}}\dl(\Bx-\Bq),
\end{align}
where $\hat\triangle:=\hat\nb^\mu\hat\nb_\mu$ is the three-dimensional
Laplace--Beltrami operator, $p:=p^{\mu\nu}g_{\mu\nu}$, and
$p^\mu:=\hat g^{\mu\nu}p_\nu$. Hamiltonian equations for geodesic have the form
\begin{align}                                                     \label{eqmoqf}
  \dot q^\mu&=-\left.\frac N{\sqrt{M^2+\hat p^2}}\right|_{\Bx=\Bq}\!\!\!p^\mu
  -\left.N^\mu\right|_{\Bx=\Bq},
\\                                                                \label{eqmopg}
  \dot p_\mu&=-\pl_\mu\left[N\sqrt{M^2+\hat p^2}
  -N^\nu p_\nu\right]_{\Bx=\Bq}.
\end{align}

Transition from Lagrangian equations of motion (\ref{eiepop}), (\ref{extwel}) to
Hamiltonian (\ref{edycgw})--(\ref{eqmopg}) is complicated. Hamiltonian
formulation of General Relativity was given in \cite{Dirac58B,ArDeMi62}.
For General Relativity and  a point particle it can be found i.e.\ in
\cite{GitTyu90}. Combined Hamiltonian system of equations of motion for a point
particle in General Relativity was considered in \cite{ArDeMi60B,MenSem00}.

Now we are prepared to solve the full system of equations
(\ref{edycgw})--(\ref{eqmopg}) for a point particle at rest.
We assume that the particle is located at the origin of the coordinate system:
\begin{equation*}
  q^0=x^0=\tau,\qquad q^\mu=0,~~\mu=1,2,3.
\end{equation*}
We are seeking spherically symmetric static solution of the system of equations
of motion (\ref{edycgw})--(\ref{eqmopg}). In this case, the coordinate system
can be chosen in such a way that shift functions are zero, $N_\mu=0$. We choose
also the Weyl flat gauge for the spatial metric
\begin{equation}                                                  \label{qfsdok}
  g_{\mu\nu}=-f^2\dl_{\mu\nu},
\end{equation}
where $\dl_{\mu\nu}:=\diag(+++)$ is the Euclidean metric and
$f=f\big(\sqrt{(x^1)^2+(x^2)^2+(x^3)^2}\big)$. In the static case, all momenta
vanish: $p^{\mu\nu}=0$ and $p_\mu=0$. Derivatives $\dot g_{\mu\nu}=0$ are also
zero. Thus there are two unknown functions in this case which depend only on
radius in spherical coordinates: $N(r)$ and $f(r)$.

If the above assumptions are fulfilled then equations (\ref{ekicop}),
(\ref{edefmk}), and (\ref{eqmoqf}) are identically satisfied. Equations
(\ref{edycgw}), (\ref{ecamod}), and (\ref{eqmopg}) take the form
\begin{align}                                                     \label{qdytwv}
  -\hat e\hat R+16\pi M\dl(\Bx)&=0,
\\                                                                \label{qsemby}
  \hat e\left(\hat\triangle N\hat g^{\mu\nu}-\hat\nb^\mu\hat\nb^\nu N\right)
  -\hat eN\left(\hat R^{\mu\nu}-\frac12g^{\mu\nu}\hat R\right)&=0,
\\                                                                \label{qlpojg}
  \pl_\mu N\big|_{\Bx=0}&=0,
\end{align}
where we divided equation (\ref{eqmopg}) by $M$.

It is important that equation (\ref{qsemby}) can not be divided by $\hat e$, and
indices $\mu,\nu$ can not be lowered because metric have singularity at $r=0$,
and the Ricci tensor and scalar curvature contain the $\dl$-function.
\section{The main equation                                       \label{seldef}}
Equation (\ref{qdytwv}) is the main equation which is to be solved.
The three-dimensional volume element $\hat e$ on the left hand side is needed
for covariance because $\dl$-function is the scalar density with respect to
coordinate transformations.

This equation is the simplest covariant second order differential equation
for metric with the $\dl$-function on the right hand side. In two dimensions in
Weyl flat gauge, it reduces to the Poisson equation and yields the fundamental
solution for two-dimensional Laplace operator. In higher dimensions it becomes
nonlinear, and its solution is not known up to now. We believe that this
equation may have its own applications in differential geometry which are not
discussed here.

We are seeking spherically symmetric solutions of equation (\ref{qdytwv}), and
that is in agreement with the symmetry of the right hand side. Any spherically
symmetric metric is Weyl flat, which means that there exists the coordinate
system where the metric has the form (\ref{qfsdok}). Then main equation
(\ref{qdytwv}) for the Weyl flat spherically symmetric metric takes the form
\begin{equation}                                                  \label{qfilht}
  \triangle f-\frac{\pl f^2}{2f}=-4\pi M\dl(\Bx),
\end{equation}
where $\triangle:=\pl_1^2+\pl_2^2+\pl_3^2$ is the flat Laplace operator and we
introduced notation $\pl f^2:=\dl^{\mu\nu}\pl_\mu f\pl_\nu f$.

Equation (\ref{qfilht}) is nonlinear because of the second term on the
left hand side. Solution of this equation should be understood in the
generalized sense after integration with a test function (see, i.e.\
\cite{Vladim71}) because there is the $\dl$-function on the right hand side. If
we look for a solution in the space of functionals (generalized functions)
$\CD'(\MR^3)$ then two serious problems arise. Since the equation is nonlinear,
the generalized functions must be multiplied, but multiplication in $\CD'$ is
absent. Second, we shall see in what follows that some terms in equation
(\ref{qfilht}) are not locally integrable, and hence a regularization of the
appearing integrals is necessary. Though multiplication of functionals in
$\CD'$ is not defined in general, for some particular functionals, i.e.\
sufficiently smooth, the left hand side has nevertheless definite meaning.

We look for spherically symmetric solution of equation (\ref{qfilht}) in the
spherical coordinate system $r,\theta,\vf$.
\begin{theorem}                                                   \label{toyfrd}
The function
\begin{equation}                                                  \label{qsolco}
  f=1+\frac Mr+\frac{M^2}{4r^2}=\left(1+\frac M{2r}\right)^2
\end{equation}
satisfies equation (\ref{qfilht}) which solution is understood in the
generalized sense after integration with a test function.
\end{theorem}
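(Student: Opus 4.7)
My plan is to replace the nonlinear equation (\ref{qfilht}) by a linear statement via the substitution $u:=\sqrt{f}=1+M/(2r)$, verify it pointwise for $r>0$, and then extract the distributional content at the origin by analytic (Hadamard finite part) regularization of the non-locally-integrable pieces.

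First I would record the purely algebraic identity, valid wherever $f$ is smooth and positive, $\triangle f-\pl f^2/(2f)=2u\,\triangle u$, which follows from $\triangle(u^2)=2u\triangle u+2|\nb u|^2$ combined with $|\nb f|^2/(2f)=2|\nb u|^2$. Since $\triangle u=(M/2)\triangle(1/r)=0$ for $r>0$, the left hand side of (\ref{qfilht}) vanishes pointwise on $\MR^3\setminus\{0\}$, in agreement with the support of the right hand side. A direct computation confirms the sharper statement
\begin{equation*}
  \triangle f=\frac{M^2}{2r^4}=\frac{\pl f^2}{2f}\qquad\text{on }\MR^3\setminus\{0\},
\end{equation*}
so the two divergent $1/r^4$ contributions cancel classically and all the content is concentrated at $\Bx=0$.

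Second, I would pass to $\CD'(\MR^3)$. Because $f=1+M/r+M^2/(4r^2)$ is locally integrable in three dimensions, its distributional Laplacian $\triangle f$ is unambiguous. Using $\triangle(1/r)=-4\pi\dl(\Bx)$ together with the formula $\triangle(r^\lm)=\lm(\lm+1)r^{\lm-2}$, analytically continued in $\lm$ to the regular value $\lm=-2$, I obtain
\begin{equation*}
  \triangle f=-4\pi M\,\dl(\Bx)+\frac{M^2}{2}\,\mathrm{Pf}(1/r^4),
\end{equation*}
with no additional $\dl$-contribution at $\lm=-2$ because the family $r^\lm$ has no pole there in $\MR^3$. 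I would then define the nonlinear term $\pl f^2/(2f)$, which equals $M^2/(2r^4)$ for $r>0$ but is not locally integrable, by the \emph{same} analytically regularized functional $(M^2/2)\,\mathrm{Pf}(1/r^4)$. The two Hadamard finite parts then cancel exactly and (\ref{qfilht}) reduces to $-4\pi M\,\dl(\Bx)=-4\pi M\,\dl(\Bx)$.

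The main obstacle is the consistency of the regularization: one has to verify that the prescription forced on $\triangle(1/r^2)$ by duality against test functions in $\CD(\MR^3)$ agrees with the prescription used to define $\pl f^2/(2f)$, so that the two $\mathrm{Pf}(1/r^4)$ pieces cancel identically rather than differ by some distribution supported at the origin. This is the content of the analytic regularization for exponential spherically symmetric functionals invoked in Section~\ref{seldef}, and it is where the $\sqrt{f}$-reformulation earns its keep: both singular contributions arise from second derivatives of the single function $1/r$, so a uniform analytic-continuation scheme treats them on the same footing.
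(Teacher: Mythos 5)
Your proposal is correct and follows essentially the same route as the paper: both arguments rest on $\triangle(1/r)=-4\pi\dl(\Bx)$ together with analytic continuation of the functional $r^\lm$ (regular at $\lm=-2$ and $\lm=-4$), under which $\triangle(M^2/4r^2)$ and the regularized nonlinear term $M^2/(2r^4)$ become the identical functional and cancel exactly. The only real difference is presentational: where you assign the value $\tfrac{M^2}{2}\mathrm{Pf}(1/r^4)$ to $\pl f^2/(2f)$ directly and flag the consistency of the two regularizations as the remaining point, the paper settles that point by continuing the family $f_\nu=\left(1+\tfrac M2 r^\nu\right)^2$, for which $\pl f_\nu^2/(2f_\nu)=\tfrac{M^2}{2}\nu^2 r^{2\nu-2}$ holds for $\re\nu\ge1$, down to $\nu=-1$ --- precisely the uniform analytic-continuation scheme your $\sqrt f$-reformulation anticipates.
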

\begin{proof}
We use the analytic regularization for the generalized function $r^\lm$,
$\lm\in\MC$ \cite{GelShi58A}. For $\re\lm>-3$ the function $r^\lm$ is locally
integrable, and the functional
\begin{equation*}
  (r^\lm,\vf):=\int_{\MR^3}\!\!\!dx\, r^\lm\vf
\end{equation*}
is defined for all test functions $\vf\in\CD(\MR^3)$ (smooth functions with
compact support). This functional is
analytically continued on the whole complex plane $\lm\in\MC$ except simple
poles located on the real line $\lm=-3,-5,-7,\dotsc$ \cite{GelShi58A}. The poles
does not matter because we do not fall into them in the considered case.

Consider the functional
\begin{equation*}
  f_\nu:=\left(1+\frac M2r^\nu\right)^2,\quad \re\nu\ge1,\quad\in\CD'(\MR^3),
\end{equation*}
depending on parameter $\nu\in\MC$. Simple calculations show that the following
identity holds
\begin{equation}                                                  \label{qjghui}
  \frac{\pl f_\nu^2}{2f_\nu}=\frac{f_\nu^{\prime2}}{2f_\nu}
  =\frac{M^2}2\nu^2r^{2\nu-2},
\end{equation}
where prime denotes differentiation with respect to the radius. It is clearly
well defined for $\re\nu\ge1$. This identity can be analytically continued to
$\nu=-1$ because we do not fall into the poles described above. As a result, we
get the following expression for the second term on the left hand side of
equation (\ref{qfilht})
\begin{equation}                                                  \label{qcectr}
  \frac{\pl f^2}{2f}=\frac{M^2}{2r^4},
\end{equation}
This equality is understood in the generalized sense after integration with test
functions. Therefore equation (\ref{qfilht}) for the generalized function
(\ref{qsolco}) takes the form
\begin{equation*}
  \triangle\left(1+\frac Mr+\frac{M^2}{4r^2}\right)-\frac{M^2}{2r^4}
  =-4\pi M\dl(\Bx).
\end{equation*}

The function $M/r$ is the unique fundamental solution of the Laplace equation
decreasing at infinity (see., for example, \cite{Vladim71}):
\begin{equation}                                                  \label{qfulap}
  \triangle\left(1+\frac Mr\right)=\triangle\frac Mr=-4\pi M\dl(\Bx).
\end{equation}

The function $M^2/2r^4$ is not locally integrable and requires regularization.
We understand it as analytic continuation of the functional $r^\lm$ to $\lm=-4$,
which is mentioned at the beginning.

The equality
\begin{equation*}
  \triangle r^\mu=\frac1{r^2}\pl_r\left(r^2\pl_r r^\mu\right)
  =\mu(\mu+1)r^{\mu-2}
\end{equation*}
is simply verified for $\re\mu>2$. It remains valid also after analytic
continuation to the point $\mu=-2$. For $\mu=-2$, it looks as follows
\begin{equation*}
  \triangle\frac1{r^2}=2\frac1{r^4}.
\end{equation*}
Thus, for analytic regularization, we attribute to the locally nonintegrable
function $1/r^4$ the functional
\begin{equation*}
  \left(\frac1{r^4},\vf\right):=\frac12\left(\triangle\frac1{r^2},\vf\right)
  =\frac12\left(\frac1{r^2},\triangle\vf\right).
\end{equation*}
The last functional is well defined.

We see that the nonlinear term on the left hand side of the equation
(\ref{qfilht}) is canceled after integration with a test function
\begin{equation}                                                  \label{qsokhg}
  \left(\triangle\frac{M^2}{4r^2}-\frac{M^2}{2r^4},\vf\right)=0,
\end{equation}
if the analytic regularization is used for the function $M^2/2r^4$.
\end{proof}

Equation (\ref{qdytwv}) for a massive point particle was considered in [2--4]
for Weyl Euclidean form of the metric. The authors used parameterization
$f:=\chi^2$ and obtained nonlinear equation
\begin{equation}                                                  \label{qteral}
  \chi\triangle\chi=-2\pi M\dl(\Bx)
\end{equation}
which is equivalent to equation (\ref{qfilht}). They proposed the following
solution
\begin{equation}                                                  \label{qrqwsd}
  \chi(r)=1+\frac M{2r\chi(0)}
\end{equation}
differing from that corresponding to solution (\ref{qsolco}) by the essential
factor $\chi(0)$. This factor is obtained from the regularized (smeared)
$\dl$-function and depends on small parameter $\e$,
\begin{equation*}
  \chi(0)\sim \sqrt{\frac M \e},
\end{equation*}
which is the ``radius'' of the regularized $\dl$-function. In the limit
$\e\to0$, solution (\ref{qrqwsd}) becomes trivial, $\chi=1$, and satisfies
equation $\sqrt g R=0$ corresponding to a vacuum rather than to a point
particle. So the solution proposed in [2--4] does not depend on mass $M$. The
generalization of equation (\ref{qteral}) for a charged massive particle
interacting with electromagnetic field was also considered. In this case, the
factor $\chi(0)$ becomes nontrivial, and the total mass of a particle is
proportional to its charge. This effect is interpreted as the regularization of
the self-energy of a point charge by gravitational interaction.

To show the difference in the approaches, let us consider equation
(\ref{qteral}) for
\begin{equation}                                                  \label{qkutef}
  \chi=1+\frac M{2r},
\end{equation}
which is different from (\ref{qrqwsd}). Substitution of this solution into
equation (\ref{qteral}) yields
\begin{equation}
  \left(1+\frac M{2r}\right)\triangle\left(1+\frac M{2r}\right)=
  \left(1+\frac M{2r}\right)(-2\pi M)\dl(\Bx)=-2\pi M\dl(\Bx).
\end{equation}
Expression $\frac1r\dl(\Bx)$ is not defined. We define it as follows. Relation
\begin{equation*}
  \triangle\frac{r^\lm}r=\lm(\lm-1)r^{\lm-3}+r^\lm\triangle\frac1r
\end{equation*}
can be easily verified for $\re\lm>3$. We analytically continue it to the point
$\lm=-1$ where it can be rewritten as
\begin{equation*}
  \frac1r\triangle\frac1r=\triangle\frac1{r^2}-\frac2{r^4}.
\end{equation*}
So, we define
\begin{equation*}
  \left(\frac1r\dl(\Bx),\vf\right)
  :=-\frac1{4\pi}\left(\frac1r\triangle\frac1r,\vf\right)
  :=-\frac1{4\pi}\left(\triangle\frac1{r^2}-\frac2{r^4},\vf\right).
\end{equation*}
For analytically regularized $\frac1{r^4}$ the right hand side is zero, and we
get relation
\begin{equation*}
  \frac1r\dl(\Bx)=0
\end{equation*}
which is valid in a generalized sense. In other words, the relation
$r^\lm\dl(\Bx)=0$ which is valid for $\lm>0$ is analytically continued to the
point $\lm=-1$.

We see that solution (\ref{qsolco}) can be obtained either by solving equation
(\ref{qfilht}) or (\ref{qteral}). In both cases, we used the analytic
regularization and get the same solution. Our solution (\ref{qsolco}) does
depend on mass $M$. It is nontrivial for neutral particle and differs
essentially from solution (\ref{qrqwsd}) considered in [2--4].

Thus we have found the generalized solution of equation (\ref{qfilht}) in the
space of functionals $\CD'(\MR^3)$. For this solution the left hand side of the
main equation is well defined. Unfortunately, we did not describe explicitly
that subspace in $\CD'(\MR^3)$, for which the left hand side of the main
equation is defined in general. This question is complicated and related to the
uniqueness of the obtained solution. We leave it for future work.
\section{Last equations}
We are left with equations (\ref{qsemby}) and (\ref{qlpojg}) which are to be
solved.

As for the case of equation (\ref{qdytwv}), we cannot explicitly describe the
subspace in $\CD'(\MR^3)$ for which the nonlinear part of equation
(\ref{qsemby}) is defined. Instead, we write down the solution in explicit form
and check that all terms are really defined for it.
\begin{theorem}
Let three dimensional metric has the form (\ref{qfsdok}) where the conformal
factor is given by equation (\ref{qsolco}). Then lapse function
\begin{equation}                                                  \label{qshuyr}
  N=\frac{1-\frac M{2r}}{1+\frac M{2r}}.
\end{equation}
satisfies equation (\ref{qsemby}) in $\MR^3$ in a generalized sense.
\end{theorem}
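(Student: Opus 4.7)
The pair $(N,f)$ defined by (\ref{qshuyr}) and (\ref{qsolco}) is recognizable as the Schwarzschild metric written in isotropic coordinates. Since this is a classical vacuum solution of Einstein's equations, equation (\ref{qsemby}), which in the static sector with vanishing momenta is just $\dot p^{\mu\nu}=0$, is satisfied pointwise for every $r>0$. The content of the theorem is therefore to promote this classical pointwise identity to one that holds as a distribution on $\MR^3$, verifying in particular that no anomalous $\dl$-type contribution arises at the origin.

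My plan would be to substitute the Weyl-flat ansatz $g_{\mu\nu}=-f^2\dl_{\mu\nu}$ into (\ref{qsemby}) and expand the tensorial and scalar curvature terms by the standard conformal formulas. With $\hat g^{\mu\nu}=-\dl^{\mu\nu}/f^2$ and $\hat e=f^3$, every entry of (\ref{qsemby}) becomes a polynomial expression in $f$, $N$, and their first and second flat derivatives. Plugging in the explicit $f$ and $N$ produces a collection of contributions of the form $1/r^k$ with $k$ up to $4$, individually not locally integrable near the origin. Following the strategy of Theorem \ref{toyfrd}, I would interpret each such term through the analytic regularization of $r^\lm$: introduce a regularized $f_\nu=(1+(M/2)r^\nu)^2$ and the analogous $N_\nu$, compute every term classically in the region of $\nu$ where all products are locally integrable, and analytically continue to the relevant value. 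The vacuum identity at the target $\nu$ already guarantees that the sum vanishes pointwise for $r>0$; it then remains to show that the analytically continued sum carries no concentrated $\dl$-type residue at the origin in any tensor component.

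The main obstacle I anticipate is the tensorial bookkeeping. Equation (\ref{qsemby}) has several independent components, each built from delicate cancellations between pieces such as $\pl_\mu f\,\pl_\nu N$ and $f\,\pl_\mu\pl_\nu N$ that are individually non-integrable near $r=0$ but classically cancel for $r>0$. I must verify that the analytic regularization of Theorem \ref{toyfrd} treats all these pieces consistently across every component, not merely in the trace, so that no hidden $\dl$-function or derivative of $\dl$ survives. Once this bookkeeping is dispatched, the proof reduces structurally to the same $\triangle(1/r^2)=2/r^4$ cancellation already exploited in (\ref{qsokhg}), together with the observation that spherical symmetry of $N$ would also give, by parity, the complementary condition (\ref{qlpojg}) needed for equilibrium of the particle.
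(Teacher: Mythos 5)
Your proposal correctly identifies the two-stage structure (classical vacuum identity for $r>0$, then a distributional check at the origin), but it stops exactly where the proof actually lives, and the mechanism you anticipate for the origin is not the one that works. The paper does \emph{not} regularize $N$ and $f$ with an auxiliary parameter $\nu$ and analytically continue every term of (\ref{qsemby}); it argues much more directly. First, the term $\hat e\,\hat\triangle N\,\hat g^{\mu\nu}$ is handled by observing that $N$ is \emph{bounded} (it tends to $-1$ as $r\to0$) and harmonic for $r>0$ with respect to $\hat\triangle$, so $\hat\triangle N=0$ holds weakly on all of $\MR^3$ with no concentrated term at the origin --- this is the crucial difference from $1/r$, and no analytic continuation is needed; the subsequent multiplication by $\hat e\hat g^{\mu\nu}$ is legitimate because $\hat e\sim r^{-4}$ while $\hat g^{\mu\nu}\sim r^4\dl^{\mu\nu}$. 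Second, and this is where your plan goes astray: after the classical cancellation of $\hat e\hat\nb^\mu\hat\nb^\nu N+\hat eN(\hat R^{\mu\nu}-\tfrac12 g^{\mu\nu}\hat R)$ for $r>0$, a $\dl$-type residue \emph{does} survive --- the curvature carries the source $\hat e\hat R=16\pi M\dl(\Bx)$ from the main equation --- so your stated goal of showing that ``no concentrated $\dl$-type residue arises'' cannot be achieved. The equation is nevertheless satisfied because the surviving term is $\tfrac14 g^{\mu\nu}N\hat e\hat R=4\pi M\,g^{\mu\nu}N\,\dl(\Bx)$ and the prefactor behaves as $g^{\mu\nu}N\sim\dl^{\mu\nu}r^4$ near the origin, so it is annihilated by the identity $r^4\dl(\Bx)=0$. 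This suppression by the degenerating inverse metric is the heart of the theorem (and is why the paper insists the equation must be kept with upper indices and not divided by $\hat e$); it is absent from your outline.

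Beyond the wrong mechanism, the proposal is incomplete as a proof: the ``tensorial bookkeeping'' you defer is precisely the content to be established, and asserting that it ``reduces structurally to the same $\triangle(1/r^2)=2/r^4$ cancellation'' as in Theorem \ref{toyfrd} is not substantiated --- that cancellation governs the trace equation (\ref{qdytwv}) for $f$, not the mixed $f$--$N$ terms of (\ref{qsemby}). Your closing remark that spherical symmetry ``by parity'' yields (\ref{qlpojg}) also conflicts with the paper, which states explicitly that (\ref{qlpojg}) is \emph{not} satisfied ($\pl_rN\to 4/M\ne0$ at the origin) and must be reinterpreted physically by averaging over a sphere; in any case that equation is outside the scope of this theorem.
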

\begin{proof}
The lapse function is positive for $r>M/2$ and negative for $0<r<M/2$. It has
finite limit at zero,
\begin{equation*}
  \underset{r\to0}\lim N=-1,
\end{equation*}
and is infinitely differentiable for $r>0$. Therefore the lapse function is
locally integrable and lies in $\CD'(\MR^3)$.

One can easily verify that the Laplacian of the lapse function for $r>0$ is
equal to zero,
\begin{equation*}
  \hat\triangle N=\frac1{\hat e}\pl_\mu\big(\hat e\hat g^{\mu\nu}\pl_\nu N\big)
  =0.
\end{equation*}
Function $N$ is continuous, and its derivatives have discontinuities at zero.
Since $N\in\CD'(\MR^3)$ then the weak derivatives are defined everywhere, and
equality
\begin{equation*}
  \hat\triangle N=0,
\end{equation*}
is fulfilled in a generalized sense in the whole $\MR^3$ because the lapse
function has finite limit at zero. This equality can be multiplied by
$\hat eg^{\mu\nu}$ because near zero
\begin{equation}                                                  \label{qhtser}
  \hat e\sim r^{-4},\qquad g^{\mu\nu}\sim r^4\dl^{\mu\nu}.
\end{equation}
Thus the first term on the left hand side of equation (\ref{qsemby}) is equal to
zero in the generalized sense in the whole $\MR^3$.

One can verify that for $r>0$ the following equality holds
\begin{equation*}
  \hat e\hat\nb^\mu\hat\nb^\nu N
  +\hat eN\left(\hat R^{\mu\nu}-\frac12g^{\mu\nu}\hat R\right)=0.
\end{equation*}
Hence we are left only with $\dl$-functions which are contained in the Ricci
tensor and scalar curvature. They do not cancel, and equation (\ref{qsemby})
takes the form
\begin{equation*}
  \frac14 g^{\mu\nu}N\hat e\hat R=4\pi M g^{\mu\nu} N\dl(\Bx)=0.
\end{equation*}
This equation is fulfilled because near zero
\begin{equation*}
  g^{\mu\nu}N\sim \dl^{\mu\nu}r^4,\quad\text{and}\quad r^4\dl(\Bx)=0.
\end{equation*}
Thus equation (\ref{qsemby}) is fulfilled.
\end{proof}

We are left only with equation (\ref{qlpojg}) for the point particle. Since
\begin{equation*}
  \pl_r N=\frac M{r^2}\frac1{\left(1+\frac M{2r}\right)^2},
\end{equation*}
it is not satisfied. At this point, we invoke physical arguments. The right hand
side of equation (\ref{eqmopg}) is the force acting on the particle from the
gravitational field which is produced by the particle itself. Therefore equation
(\ref{eqmopg}) is understood after averaging over a sphere surrounding a
particle. Then it is fulfilled because of the spherical symmetry. In other
words, a particle does not move under the action of its own gravitational field.

The same situation happens in Newton's gravity and electrodynamics. For example,
for gravitational potential $\vf=-1/r$ in Newton's theory, the force acting on a
particle from its own gravitational field is equal to
\begin{equation*}
  \pl_\mu\frac1r=\frac{x_\mu}{r^3},
\end{equation*}
and equations of motion are not satisfied. Nevertheless they are fulfilled after
averaging over a sphere.
\section{Relation to the Schwarzschild solution}
Taking together the laps function (\ref{qshuyr}) and the conformal factor for
the three-dimensional space part of metric (\ref{qsolco}) we obtain the
metric
\begin{equation}                                                  \label{qschre}
  ds^2=\left(\frac{1-\frac M{2r}}{1+\frac M{2r}}\right)^2dt^2
  -\left(1+\frac M{2r}\right)^4
  \left[dr^2+r^2(d\theta^2+\sin^2\theta d\vf^2)\right].
\end{equation}
It is the well known Schwarzschild metric in isotropic coordinates. Isotropic
coordinates for the Schwarzschild metric are well known for a long
time, see, i.e.\ \cite{LanLif62}. The new result is not the Schwarzschild metric
in isotropic coordinates (\ref{qschre}), but the statement that it satisfies
equations (\ref{qdytwv}) and (\ref{qsemby}) with $\dl$-type source.
This fact has principal meaning and far reaching physical consequences. We shall
see that gravitational attraction to the point mass $M$ at large distances
is replaced by repulsion at distances $r<M/2$.

For $M>0$, metric (\ref{qschre}) is defined everywhere in $\MR^4$ except the
world line of the origin of the spherical coordinate system:
\begin{equation*}
  -\infty<t<\infty,~~0<r<\infty,~~0\le\theta\le\pi,~~0\le\vf<2\pi.
\end{equation*}

Let us compare the Schwarzschild metric in isotropic coordinates (\ref{qschre})
and the \linebreak [4] Schwarzschild metric in Schwarzschild coordinates
\cite{Schwar16}
\begin{equation}                                                  \label{qschrt}
  ds^2=\left(1-\frac{2M}\rho\right)dt^2
  -\frac{d\rho^2}{1-\displaystyle\frac{2M}\rho}
  -\rho^2(d\theta^2+\sin^2\theta d\vf^2).
\end{equation}
The Schwarzschild radial coordinate is denoted here by $\rho$. The last metric
is defined for $2M<\rho<\infty$ (outside the horizon $\rho_s=2M$) and for
$0<\rho<2M$ (under the horizon). It is asymptotically flat and tends to the
Lorentz metric when $\rho\to\infty$. It was obtained without $\dl$-function
source, and the constant $M$ appears in it as the integration constant. From
mathematical point of view, it may take arbitrary values $M\in\MR$. However, if
we assume that the Schwarzschild solution describes the gravitational field
outside a point particle, then comparison with the Newton gravitational law at
large distances (see, i.e.\ \cite{LanLif62}) tells us that the integration
constant $M$ is the mass of a particle and therefore must be positive.

The transformation of radial coordinate which brings metric (\ref{qschrt})
into the form (\ref{qschre}) is well known
\begin{equation}                                                  \label{qincde}
  \rho=r\left(1+\frac M{2r}\right)^2.
\end{equation}
It is shown in Fig.\ref{fcotra}. When radius $r$ decreases from infinity to the
critical value $r_*:=M/2$, the Schwarzschild radial coordinate decreases from
$\infty$ to the horizon $\rho_s:=2M$. Afterwards the radius $\rho$ increases
from $\rho_s$ to $\infty$ as the radius $r$ decreases from $r_*$ to zero. Thus
two copies of the Schwarzschild metric outside horizon $2M<\rho<\infty$ are
mapped on two distinct domains in $\MR^3$: $0<r<r_*$ and $r_*<r<\infty$. On the
critical sphere $r=r_*$ they are smoothly glued together. The spatial part of
the metric is not degenerate here. Note that area of a sphere surrounding a
particle tends to infinity as it gets closer to it. This is related to the fact
that components of the spatial part of the metric (\ref{qschre}) diverge for
$r\to0$.
\begin{figure}[h,b,t]
\hfill\includegraphics[width=.35\textwidth]{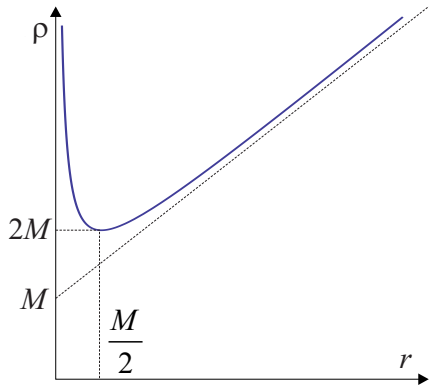}
\hfill {}
\centering \caption{Transformation of the Schwarzschild coordinates to the
isotropic ones.}
\label{fcotra}
\end{figure}

One can easily calculate the asymptotic of the zeroth component of the metric
(\ref{qschre})
\begin{equation*}
  g_{00}\approx 1-\frac{2M}r,\qquad r\to\infty.
\end{equation*}
It is the same as for the Schwarzschild metric in the Schwarzschild coordinates.
This is not surprising because $\rho\to r$ for $r\to\infty$. It means that there
is no problem with Newton's gravitational law for metric (\ref{qschre}) because
it is defined by the asymptotic at $r\to\infty$.

We see that point massive particle is located at that site where the
Schwarzschild radial coordinate equals infinity. This phenomena may be called
gravitational asymptotic freedom because near the particle,
$r=0\Rightarrow\rho=\infty$ the space-time becomes asymptotically flat.

The determinant of metric (\ref{qschre}) is equal to
\begin{equation}                                                  \label{qdetfr}
  \det g_{\al\bt}=-\left(1-\frac M{2r}\right)^2\left(1+\frac M{2r}\right)^{10}
  r^4\sin^2\theta.
\end{equation}
Consequently, the metric is degenerate on the sphere of radius
\begin{equation}                                                  \label{qdefra}
  r_*:=\frac M2,
\end{equation}
which corresponds to the horizon $\rho_s=2M$ for the Schwarzschild solution and
on the $z$-axis $(\theta=0,\pi)$. Degeneracy on the $z$-axis is related to
the spherical coordinate system and location of a point particle at the origin
of the coordinate system for $z=0$.

It is well known that all geometric invariants for the Schwarzschild metric
constructed from the curvature tensor have no singularities on the horizon.
Thus, we have no curvature singularities at the critical sphere $r=M/2$.
Nevertheless, there is a problem.

To describe global structure of the obtained space-time we have to analyze
behaviour of extremals (geodesics) for Schwarzschild solution in isotropic
coordinates which describe motion of probe particles. The behavior of geodesics
in Schwarzschild coordinates is well known and can be recalculated in
isotropic coordinates. Therefore we mention here only some points of their
qualitative behavior. The space-time at $r=0$ (near the particle) and at
$r=\infty$ (space infinity) is geodesically complete, because both regions
correspond to infinite value of the Schwarzschild radial coordinate
$\rho=\infty$. At critical value of the radius $r=r_*$ corresponding to the
Schwarzschild radius $\rho_s$ geodesics are well known to be incomplete. It
means that at $r=r_*$ the space-time in isotropic coordinates is geodesically
incomplete, and we have to extend it. The best way for isotropic coordinates is
to identify geodesics from outside and inside the critical sphere $r_*$.
Afterwards we would get geodesically complete space-time. The problem is how to
do this in a mathematically consistent way. This is a hard problem, and we hope
to address it in the future.

Another particular property of isotropic coordinates is that gravitational
attraction at large distances changes to repulsion. It is clear because in
Schwarzschild coordinates, probe particles always have acceleration directed
towards horizon. It means that probe particles located inside the critical
sphere move outside the center. Thus the gravitational attraction outside
$r_*$ changes to repulsion in the region $0<r<r_*$. As far as we know this is a
novel feature of the obtained solution.
\section{Conclusion}
In the present paper, we prove that the Schwarzschild solution in isotropic
coordinates satisfies the system of Einstein's and geodesic equations for a
point massive particle. Nontrivial energy-momentum tensor appears on the right
hand side of Einstein's equations which is proportional to three-dimensional
$\dl$-function. Rewriting the Schwarzschild solution in isotropic coordinates is
by itself a student's problem. The result of the paper is that we attribute
mathematical meaning for Einstein's equations with $\dl$-type source and provide
explicit solution.

The mass $M$ appears in the Schwarzschild solution as an integration constant
and may be arbitrary. In our approach, the mass $M$ of a point particle enters
the action from the very beginning.

The obtained solution is isometric to Einstein--Rosen bridge \cite{EinRos35}.
In the original paper, the bridge was attributed to an elementary particle which
in our notations corresponds to the critical sphere $r=r_*$. The space-time is
described by two sheets. We have shown that the particle described by a
$\dl$-function is located not at $r=r_*$, but at geodesically complete
``infinity'' of one of the sheets corresponding to $r=0$. Geodesically complete
infinity of the second sheet lies at infinity $r=\infty$ and corresponds to
asymptotically flat space-time. Both sheets are glued together at the critical
sphere $r=r_*$ where metric becomes degenerate and gravitational attraction at
large distances is replaced by repulsion at small distances. This effect is of
primary importance and makes us to reconsider our approach to many gravitational
phenomena.

At present, the Einstein--Rosen bridge has another interpretation. Two sheets
are considered as two different universes which are connected by a worm hole at
$r=r_*$. It was noted recently \cite{GuKaNiPa09} that nontrivial energy-momentum
tensor appears on the right hand side of Einstein's equations at $r_*$. In
our approach, the right hand side of Einstein's equations is equal to zero at
$r=r_*$. The difference comes from choosing the solution to equation
$N^2=g_{00}$. For a given metric we chose solution (\ref{qshuyr}) which is
positive for $r>r_*$ and negative at $r<r_*$. The other solution $|N|$ is
chosen in \cite{GuKaNiPa09}, the modulus sign resulting in the appearance of the
singularity on the right hand side of Einstein's equations at $r=r_*$.

The obtained space-time is topologically four-dimensional Euclidean space with
removed world line of the particle $r=0$. This space-time is geodesically
complete at $r\to\infty$ as well as at $r\to 0$. At the critical sphere $r_*$
geodesics are incomplete. The problem of how to continue them remains open.

The effect of the transformation of gravitational attraction into repulsion is
the straightforward consequence of Einstein's equations and leads to many
questions. For example, how can a black hole be formed if particles cannot
come close one to another ? One can make a lot of speculations at this point but
we do not consider them here.

The author is grateful to I.~V.~Volovich, A.~K.~Gushchin, Yu.~N.~Drozhzhinov,
\linebreak[4]\fbox{B.~I.~Zavialov}, and V.~P.~Mikhailov for discussions and
valuable comments. The work is supported by RFFI (grants 11-01-00828-a and
13-01-12424-ofi-m), the Program for Supporting Leading Scientific Schools
(grant NSh-2928.2012.1), and the Program ``Modern problems of theoretical
mathematics'' by RAS.


\begin{thebibliography}{1}

\bibitem{Schwar16}
K.~Schwarzschild.
\newblock {\"Uber das Gravitationsfeld eines Massenpunktes nach der
  Einsteinschen Theorie}.
\newblock {\em Sitzungsber.\ Akad.\ Wiss.\ Berlin}, pages 189--196, 1916.

\bibitem{ArDeMi60A}
R.~Arnowitt, S.~Deser, and C.~W.~Misner.
\newblock Finite Self-Energy of Classical Point Particles.
\newblock {\em Phys.\ Rev.\ Lett.}, 4(3):375--377, 1960.

\bibitem{ArDeMi60B}
A.~Arnowitt, S.~Deser, and C.~W.~Misner.
\newblock Gravitational--Electromagnetic Coupling and the Classical
           Self--Energy Problem.
\newblock {\em Phys.\ Rev.}, 120(1):313--320, 1961.

\bibitem{ArDeMi62}
R.~Arnowitt, S.~Deser, and C.~W.~Misner.
\newblock The dynamics of general general relativity. in
{\em Gravitation: an introduction to current research}, Ed. L.~Witten.
\newblock John Wiley \& Sons, Inc. New York -- London, 1962, gr-qc/0405109.

\bibitem{Tanghe61}
F.~R.~Tangherlini.
\newblock Nonclassical Structure Of The Energy--Momentum Tensor Of A Point Mass
Source For The Schwarzschild Field.
\newblock {\em Phys.\ Rev.\ Lett.}, 3(6): 147--149, 1961.

\bibitem{Bel69}
L.~Bel.
\newblock Schwarzschild singularity.
\newblock {\em J.\ Math.\ Phys.}, 10(8): 1501--1503, 1969.

\bibitem{BalNac93}
H.~Balasin and H.~Nachbagauer.
\newblock The energy-momentum tensor of a black hole, or what curves the
  {S}chwarzschild geometry ?
\newblock {\em Class.\ Quantum Grav.}, 10(11):2271--2278, 1993.

\bibitem{Dirac58B}
P.~A.~M.~Dirac
\newblock The Theory of Gravitation in Hamiltonian Form.
\newblock {\em Proc.\ Roy.\ Soc.\ London.} A246(1248): 333--343, 1958.

\bibitem{GitTyu90}
D.~M. Gitman and I.~V. Tyutin.
\newblock {\em Quantization of Fields with Constraints}.
\newblock Springer--Verlag, Berlin -- Heidelberg, 1990.

\bibitem{MenSem00}
P.~Menotti and D.~Seminara.
\newblock {ADM} approach to $2+1$ dimensional gravity coupled to particles.
\newblock {\em Ann.\ Phys.}, 279(1):282--310, 2000.

\bibitem{Vladim71}
V.~S. Vladimirov.
\newblock {\em Equations of Mathematical Physics}.
\newblock Marcel Dekker, New York, 1971.

\bibitem{GelShi58A}
I.~M. Gel'fand and G.~E. Shilov.
\newblock {\em Generalized Functions. Volume I: Properties and Operations}.
\newblock Academic Press, New York, 1964.

\bibitem{LanLif62}
L.~D. Landau and E.~M. Lifshitz.
\newblock {\em The Classical Theory of Fields}.
\newblock Pergamon, New York, second edition, 1962.

\bibitem{EinRos35}
A.~Einstein and N~Rosen.
\newblock The particle problem in the general theory of relativity.
\newblock {\em Phys.\ Rev.}, 43:73--77, 1935.

\bibitem{GuKaNiPa09}
E.~Guendelman, A.~Kaganovich, E.~Nissimov, and S.~Pacheva.
\newblock Einstein--rosen ``bridge'' needs lightlike brane sourse.
\newblock {\em Phys.\ Lett.}, B681(5):457--462, 2009.

\end{thebibliography}
\end{document}